\newcommand{\inout}[3]{ 
	\vspace{0.2cm}
	\noindent
	\begin{tabular}{p{.13\textwidth} p{.006\textwidth} p{.82\textwidth}}
		\multicolumn{3}{l}{\scalebox{.82}{\textsc{#1}}}\\
		\hline
		\multicolumn{1}{|l}{INPUT} &:& \multicolumn{1}{p{.82\textwidth}|}{#2} \\
		\multicolumn{1}{|l}{OUTPUT} &:& \multicolumn{1}{p{.82\textwidth}|}{#3} \\
		\hline
	\end{tabular}
	\vspace{0.2cm}
}
\newcommand{\trominodr}{\begin{gathered}\includegraphics[scale=.75]{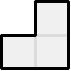}\end{gathered}} 
\begin{document}

\begin{frontmatter}
    \title{Tromino Tilings with Pegs via Flow Networks}
    \author{Javier T. Akagi\thanksref{email-tadashi}}
    \address{Facultad Polit\'ecnica, Universidad Nacional de Asunci\'on\\ NIDTEC, Campus Universitario, San Lorenzo C.P. 111421, Paraguay}
    \author{Eduardo A. Canale\thanksref{email-eduardo}}
    \address{Facultad de Ingenier\'ia, Universidad de la Rep\'ublica\\ Montevideo PC 11300, Uruguay}
    \author{Marcos Villagra\thanksref{email-marcos}}
    \address{Facultad de Ciencias Exactas y Naturales, Universidad Nacional de Asunci\'on\\ Dpto. de Matem\'aticas, Campus Universitario, San Lorenzo C.P. 111421, Paraguay}
    \thanks[email-tadashi]{Email:\href{mailto:akagi.tada@gmail.com} {\texttt{\normalshape akagi.tada@gmail.com}}}
    \thanks[email-eduardo]{Email:\href{mailto:canale@fing.edu.uy} {\texttt{\normalshape canale@fing.edu.uy}}}
    \thanks[email-marcos]{Email:\href{mailto:mvillagra@pol.una.py} {\texttt{\normalshape mvillagra@pol.una.py}}}
\begin{abstract} 
  A tromino tiling problem is a packing puzzle where we are given a region of connected lattice squares and we want to decide whether there exists a tiling of the region using trominoes with the shape of an L. In this work we study a slight variation of the tromino tiling problem where some positions of the region have pegs and each tromino comes with a hole that can only be placed on top of the pegs. We present a characterization of this tiling problem with pegs using flow networks and show that (i) there exists a linear-time parsimonious reduction to the maximum-flow problem, and (ii) counting the number of such tilings can be done in linear-time. The proofs of both results contain algorithms that can then be used to decide the tiling of a region with pegs in $O(n)$ time.
\end{abstract}
\begin{keyword}
  tromino tilings, linear-time reduction, parsimonious reduction, maximum-flow, bipartite matchings
\end{keyword}
\end{frontmatter}

\section{Introduction}\label{sec:intro}

\subsection{Background}

A \emph{polyomino tiling problem} is a packing puzzle where a player attempts to cover a board or region with polyominoes of a given shape. A \emph{polyomino} is a set of square cells joined together by their edges. Of particular interest are polyominoes of two cells called \emph{dominoes}, and polyominoes of three cells called \emph{trominoes} of which there are two types: L-trominoes with the shape of an L and I-trominoes with the shape of an I. From now on, and for the remainder of this work, we will refer to L-trominoes simply as trominoes.

One of the first rigorous studies of polyominoes is due to Golomb \cite{Gol66} who introduced a hierarchy of tiling capabilities by polyominoes. Later, Conway and Lagarias \cite{CL90} presented an algebraic necessary condition to cover a region with polyominoes. In particular, the study of domino tilings uncovered many tight connections with matrix algebra through the theory of alternating-sign matrices and graph matchings \cite{EKL92}.

In computational complexity theory, Moore and Robson \cite{MR01} showed that deciding the tiling of a given region with trominoes is NP-complete. Demaine and Demaine \cite{DD07} showed the polynomial-time equivalence between polyomino tilings, edge-matching puzzles and jigsaw puzzles. Horiyama \emph{et al.} \cite{HIN17} strengthened the NP-completeness result of Moore and Robson and proved that counting the number of tromino packings of a region is \#P-complete. More recently, Akagi \emph{et al.} \cite{AGM20} showed that the NP-completeness of deciding a tromino tiling heavily depends on geometrical properties of the region; for example, a region with the shape of an Aztec rectangle admits a polynomial time algorithm that decides a tromino tiling, whereas the problem remains NP-complete for an Aztec rectangle with at least two holes.

\subsection{Contributions}
\setcounter{footnote}{0}
In all cited works of the previous paragraphs most of the arguments rely on a (partial) characterization of the tiling problem using a graph-theoretic idea. One such idea is the so-called \emph{dual graph}\footnote{Note that this is an abuse of the term dual graph used in graph theory. The practice is, however, standard in combinatorics of tilings.} of a region, which is a graph where each cell of the region is a vertex and there is an edge between two vertices if their corresponding cells are adjacent. There is a natural correspondence between domino tilings and perfect matchings in dual graphs. Likewise, Akagi \emph{et al.} \cite{AGM20} showed that tromino tilings correspond to independent sets in a slight variation of dual graphs called an intersection graph.

In this work we present yet another way to characterize tromino tiling problems using this time an idea of flow networks. This characterization, however, applies to a slight variation of the tromino tiling problem where a region contains pegs, see Fig. \ref{fig:region-pegs}. To cover a region with pegs we use a special type of tromino that we call \emph{p-tromino} which contains a hole in its corner cell as shown in Fig. \ref{fig:p-tromino}. In a tiling of a region with pegs, only p-trominos can be placed on top of pegs and we assume that there is an infinite supply of p-trominoes and only p-trominoes; that is, there are no regular trominoes with no holes and the hole of a p-tromino cannot be placed on top of a cell with no peg. See Fig. \ref{fig:p-tromino-region} for an example of a correct placement of a p-tromino. We thus define the \emph{P-Tromino Tiling Problem} as the problem of deciding, given a region with pegs, whether there is a tiling of the region with p-trominos. More formally, the decision problem is:

\begin{figure}[t]
    \begin{subfigure}[b]{.49\linewidth}
        \centering
        \includegraphics[scale=1.5]{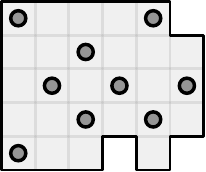}
    \end{subfigure}
    \begin{subfigure}[b]{.49\linewidth}
        \centering
        \includegraphics[scale=1.5]{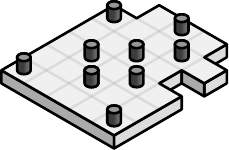}
    \end{subfigure}
    \caption{Region with pegs. Aerial view on the left and a 3D view on the right.}
    \label{fig:region-pegs}
\end{figure}

\begin{figure}[t]
    \centering
    \includegraphics[scale=2.5]{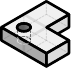}
    \caption{A p-tromino. The corner cell contains a hole and can only be placed on top of a peg on a region with pegs.}
    \label{fig:p-tromino}
\end{figure}

\begin{figure}[t]
    \begin{subfigure}[b]{.49\linewidth}
        \centering
        \includegraphics[scale=1.5]{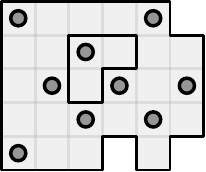}
    \end{subfigure}
    \begin{subfigure}[b]{.49\linewidth}
        \centering
        \includegraphics[scale=1.5]{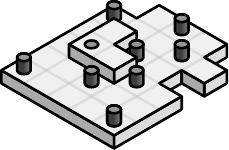}
    \end{subfigure}
    \caption{Placement of a p-tromino. Aerial view on the left and 3D view on the right.}
    \label{fig:p-tromino-region}
\end{figure}

\inout{$\textsc{P-Tromino}_{n,k}$}{
Region $R$ with $n$ cells and a list $P$ of $k$ coordinates of $R$ with pegs.
}{
``\texttt{Yes}'' if there exists a tiling of $R$ with p-trominoes; ``\texttt{No}'' otherwise.
}

Now let us recall that in a \emph{maximum-flow problem} we are given a flow network represented by a directed graph and we want to find a flow of maximum value that can be pushed from a source vertex to a sink vertex. Intuitively, the value of a flow is the rate at which some hypothetical material moves through the network. We consider the following decision problem:

\inout{$\textsc{MaxFlow}_{k}$}{
Flow network $G$, positive integer $k$.
}{
``\texttt{Yes}'' if the value of a maximum flow  in $G$ equals $k$; ``\texttt{No}'' otherwise.
}

Now we are ready to state the main result of this work.

\begin{theorem}\label{the:ptromino}
There exists a parsimonious reduction from $\textsc{P-Tromino}_{n,n/3}$ to $\textsc{MaxFlow}_{n/3}$ that is computable in linear-time.
\end{theorem}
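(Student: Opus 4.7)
The plan is to construct, from the input region $R$ with peg set $P$, a layered directed flow network $G$ in which every $s$-$t$ unit flow path encodes the placement of exactly one p-tromino, so that the maximum flow in $G$ equals $n/3$ precisely when a valid tiling exists.

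The construction is driven by the following elementary observation: a p-tromino with its hole at a peg $p$ must occupy $p$ together with exactly one of $\{E(p), W(p)\}$ (an \emph{H-neighbor}) and exactly one of $\{N(p), S(p)\}$ (a \emph{V-neighbor}); all four combinations give valid L-orientations. Thus every placement at $p$ is uniquely determined by an adjacent pair $(c_H, c_V)$ lying in a common quadrant of $p$.

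Guided by this, I would build $G$ in four layers: (i) a source $s$ with unit-capacity arcs into the input side of every peg; (ii) each peg $p$ and each non-peg cell $c$ is node-split as $(\cdot)^{in}\!\to\!(\cdot)^{out}$ with unit capacity, enforcing ``used at most once''; (iii) for each peg $p$ and each of its H-neighbors $c_H$ that lies in $R$, add a unit-capacity arc $p^{out}\!\to\!c_H^{in}$, and for each valid quadrant pair $(c_H, c_V)$ at some peg add a unit-capacity arc $c_H^{out}\!\to\!c_V^{in}$; (iv) for every non-peg cell that is V-adjacent to some peg, add a unit-capacity arc from its output to $t$. Every $s$-$t$ path then has the form $s \to p \to c_H \to c_V \to t$. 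Correctness then rests on two arguments. Given a tiling, each tromino contributes one internally disjoint such path, producing a flow of value $n/3$. Conversely, an integral maximum flow of value $n/3$ saturates every source arc and every cell pass-through, and the unit capacities force a unique decomposition into $n/3$ vertex-disjoint unit paths; a counting check shows the $2(n/3)$ cells they visit must be all non-peg cells, so each path's triple $(p, c_H, c_V)$ identifies a p-tromino. The bijection tilings $\leftrightarrow$ maximum flows delivered by this decomposition gives parsimony. Since each cell contributes a constant number of nodes and arcs, $G$ has $O(n)$ size and can be produced in $O(n)$ time by a single sweep over $R$.

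The main subtlety is ruling out ``spurious'' paths $s\to p\to c_H\to c_V\to t$ whose triple does not form an L-tromino corner-anchored at $p$, since the arc $c_H^{out}\!\to\!c_V^{in}$ carries no explicit peg label. The crucial geometric observation is that if $(c_H, c_V)$ is a valid quadrant pair only at some peg $q \neq p$, then $q$ and $p$ lie in the same row and $c_V$ is V-adjacent to $q$ but not to $p$; a spurious path therefore leaves one of $p$'s V-neighbors without any peg capable of covering it in the remaining flow. Showing that this forces the total flow to fall strictly below $n/3$ whenever any path is spurious---so that attaining value $n/3$ really does certify a valid tiling, preserving solution counts---is the crux of the argument and the step I expect to require the most care.
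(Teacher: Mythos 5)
Your construction has a genuine gap, and it is exactly at the point you flag as the crux: the claim that a spurious path forces the total flow strictly below $n/3$ is false, so your network does not even preserve the yes/no answer, let alone the count. Concretely, take the region $R=\{(0,0),(0,1),(0,2),(0,3),(1,2),(-1,2)\}$ (coordinates as (row, column)) with pegs at $p=(0,0)$ and $q=(0,2)$, so $n=6$ and $k=2$. No p-cover exists, because $p$ has no vertical neighbor in $R$ and hence no L-tromino can be corner-anchored at $p$. Yet in your network the arc $(0,1)^{out}\to(1,2)^{in}$ exists (it is the NW quadrant pair at $q$), so the two vertex-disjoint paths $s\to p\to(0,1)\to(1,2)\to t$ and $s\to q\to(0,3)\to(-1,2)\to t$ give an integral flow of value $2=n/3$. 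The first path is spurious --- its triple $\{(0,0),(0,1),(1,2)\}$ is not even a connected polyomino --- but nothing in the network penalizes it: your proposed rescue presupposes that $p$ has a vertical neighbor left uncovered, which need not be the case. The root cause is that the tip-to-tip arc $c_H^{out}\to c_V^{in}$ forgets which peg licensed it, and once a path has left $p$ for $c_H$ it can freely borrow a quadrant pair belonging to a different peg in the same row.

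The paper's construction avoids this entirely by routing each unit path \emph{through} the peg rather than starting at it: cells are $2$-colored by row parity, paths have the form $s\to(\text{black tip})\to(\text{peg})\to(\text{white tip})\to t$, and every non-source, non-sink arc is incident with a peg, so both tips of any $st$-path are adjacent to that path's peg by construction. The parity coloring then does the geometric work you were missing: a peg's horizontal neighbors share its color while its vertical neighbors have the opposite color, so the incoming tip and the outgoing tip are automatically one horizontal and one vertical neighbor of the peg, i.e., the three cells always form an L corner-anchored at the peg, and no spurious paths exist. If you want to salvage your layering, you would have to label the tip-to-tip arcs by peg (e.g., by splitting $c_H$ per incident peg), but at that point you have reconstructed the peg-in-the-middle design.
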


The main idea for a proof of Theorem \ref{the:ptromino} is the construction of a flow network representation of a region with pegs such that flow passing through a path in the network, from source to sink, corresponds to the placement of a p-tromino on the region.

As a second result we show that the counting version of $\textsc{P-Tromino}_{n,n/3}$, namely $\#\textsc{P-Tromino}_{n,n/3}$, is computable in linear-time.

\begin{theorem}\label{the:counting}
$\#\textsc{P-Tromino}_{n,n/3}$ is computable in $O(n)$ time.
\end{theorem}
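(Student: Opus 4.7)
The plan is to leverage the parsimonious reduction from Theorem~\ref{the:ptromino}. Because a parsimonious reduction preserves the number of solutions, counting tilings of $(R,P)$ amounts to counting maximum flows of value $n/3$ in the flow network $G = f(R,P)$ produced by the reduction. The first step is therefore to run the linear-time reduction from Theorem~\ref{the:ptromino} to build $G$, yielding a network with $O(n)$ vertices and edges.

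The second step is to exploit the local structure of $G$ inherited from the geometry of $R$. Each peg corresponds to a constant-size gadget in $G$ encoding its at most four possible p-tromino orientations, and two gadgets interact only if their potential p-trominoes could overlap at a non-peg cell, which forces the two pegs to lie within a bounded distance of each other on the grid. Consequently, the interaction graph on the pegs, whose edges connect two pegs whenever their gadgets share a non-peg vertex in $G$, has maximum degree $O(1)$ and total size $O(n)$, and can be built in $O(n)$ time by a single scan of $R$.

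The third step is to count valid configurations component by component in the interaction graph. Within each connected component of size $k$, I would run a propagation / dynamic-programming procedure: begin from pegs whose orientation is forced by local constraints, propagate the induced restrictions to neighbours in the interaction graph, and multiply in a factor equal to the number of remaining legal orientations whenever a branching peg is reached. Multiplying the per-component counts gives the final answer in $O(n)$ total work, under the standard uniform-cost RAM assumption for arithmetic.

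The main obstacle is this third step: a priori, a connected interaction component can be arbitrarily large, and counting configurations in such components resembles a constrained perfect-matching count, which is not solvable in linear time in general graphs. The crux of the proof will therefore be a structural lemma showing that the interaction graphs arising from regions with pegs admit a linear-time counting scheme, either because a sweep order inherited from the 2D grid keeps the DP state of constant size along the frontier, or because the peg-fixed corners rigidify the p-tromino placements enough that the legal configurations decompose into independent local choices whose contributions can be multiplied directly. If, in addition, the proof of Theorem~\ref{the:ptromino} already exposes $G$ as essentially a disjoint union of such constant-size interaction blocks (with shared cells as the only ``glue''), then the linear-time bound will follow immediately by iterating over the blocks produced by the reduction.
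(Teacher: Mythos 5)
Your first step coincides with the paper's: since the reduction of Theorem~\ref{the:ptromino} is parsimonious and computable in linear time, it suffices to count flows of value $n/3$ in the region network $G$. The gap is exactly where you say it is: your third step. You observe that the peg--interaction graph has bounded degree, but bounded degree does not bound component size (pegs can chain through shared neighbour cells across the whole region), and you explicitly defer the required structural lemma --- either a constant-width frontier DP or a decomposition into independent local choices --- without proving it. As stated, the argument does not close, because counting consistent orientation assignments over an arbitrarily large interaction component is precisely a constrained matching count, which is the hard part.

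The paper closes this gap with two ideas you are missing. First, a \emph{factorization}: each p-tromino placement at a peg decomposes into two independent choices, namely which adjacent black cell serves as one tip and which adjacent white cell serves as the other; conflicts over black cells involve only the first kind of choice and conflicts over white cells only the second. Accordingly the network splits into $G'$ (source, black cells, pegs) and $G''$ (pegs, white cells, sink), and the number of value-$n/3$ flows in $G$ is the \emph{product} of the corresponding counts in $G'$ and $G''$ --- each of which is the number of perfect matchings in a bipartite graph $H$ in which every peg vertex has degree at most $2$ (Lemma~\ref{lem:h}). Second, a \emph{degree argument}: after repeatedly peeling leaves (whose incident edges are forced into every perfect matching), the residual graph $\hat H$ has minimum degree $\ge 2$ unless it has an isolated vertex; combining $|\hat P|=|\hat B|$ with the peg-degree bound and a handshake count forces every remaining vertex to have degree exactly $2$, so $\hat H$ is a disjoint union of cycles and the number of perfect matchings is $2^k$ (or $0$ if some cycle is odd or some vertex is isolated). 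Both phases run in $O(n)$ time. Without the black/white factorization, your interaction components genuinely entangle the two tip choices and no constant-size frontier state is apparent; with it, the ``structural lemma'' you were hoping for becomes the elementary cycle decomposition above.
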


The idea of a proof of Theorem \ref{the:counting} exploits the structure in the flow network representation of a region with pegs. In particular, we show the existence of induced bipartite subgraphs in the flow network where any maximum matching is a perfect matching of size $n/3$ and each perfect matching agrees with a p-tromino tiling of the region. This reduces our problem of counting p-tromino tilings to counting perfect matchings in a highly structured bipartite graph.

If we are only interested in finding a p-tromino tiling, we can slightly modify our algorithm of Theorem \ref{the:counting} for counting perfect matchings and obtain an $O(n)$ time algorithm that finds a p-tromino tiling.

\subsection{Outline}
The rest of the paper is organized as follows. In Section \ref{sec:preliminaries} we introduce the notation used throughout this work and give some precise definitions of tromino tilings and flow networks. In sections \ref{sec:main-theorem} and \ref{sec:counting} we present our proofs of theorems \ref{the:ptromino} and \ref{the:counting}, respectively. Finally, we close this paper in Section \ref{sec:conclusions}.

\section{Preliminaries}\label{sec:preliminaries}
In this section we review some definitions from tromino tilings and flow networks and present the notation used throughout this work. We use $\mathbb Z$ to denote the set of all integers, and $\mathbb N$ will denote the set of all natural numbers including 0. We also denote by $[a,b]$ the discrete interval $\{a,a+1,\dots,b\}$.


\subsection{Tromino Tilings}

A region $R$ is a finite union of cells whose interior is connected. If $[a,a+1]\times [b,b+1]$ is a cell in $R$, its \emph{coordinate} in $R$ is $(a,b)$. To denote the cell at coordinate $(a,b)$ we use the shorthand notation $R_{(a,b)}$. Two cells are \emph{neighbors} or \emph{adjacent} if the Manhattan distance between the two cells is 1; thus, two cells in diagonal to each other are not adjacent.

A \emph{tromino} is a polyomino of 3 cells with the shape of an L, as we explained in the previous section. Given a tromino $T$, we will refer to the cell in its corner as the \emph{corner cell} of $T$ and the other two cells we will refer to them as the \emph{tips} of $T$. A \emph{cover} or \emph{tiling} of a region $R$ is a set of trominoes covering all cells of $R$ with no overlapping and each tromino is inside $R$. The size of a cover is the number of trominoes in it.

Given a region $R$ and a list of $k$ positions in $R$ denoting positions of pegs on the board, we define the \emph{tromino tiling with pegs} problem denoted $\textsc{P-Tromino}_{n,k}$, where each tromino comes with a hole in its corner cell and can be placed on the region $R$ only if that corner cell is placed on top of a position with a peg. A tromino with a hole in its corner is called a \emph{p-tromino}. A cover of $R$ with pegs is called a \emph{p-cover}, and $\textsc{P-Tromino}_{n,k}$ is the problem of deciding whether there is a p-cover or not.


\subsection{Flow Networks}

Given a graph $G = (V,E)$, let $V(G)$ and $E(G)$ denote its set of vertices and edges, respectively.  The \emph{degree} of a vertex $v$ in $G$ is the number $d_G(v)$ of vertices adjacent with $v$ in $G$.

A \emph{flow network} $G=(V,E)$ is a directed graph where each directed edge $(u,v)$ has an assigned capacity $c(u,v)\geq 0$.  If $(u,v)$ is an edge of $G$, then $(v,u)$ is not an edge of $G$. For edges not appearing in $G$ we will usually assign them a capacity of 0. Furthermore, self-loops are not allowed in flow networks. We also assign to each vertex $v\in V(G)$ a capacity $c(v)\geq 0$. Note that we overload our notation and we use $c(u,v)$ with $(u,v)\in E(G)$ to denote capacities on edges and $c(v)$ with $v\in V(G)$ to denote capacities on vertices.

Every flow network $G$ has two distinguished vertices called a \emph{source} denoted $s$ and a \emph{sink} denoted $t$. A path in $G$ from $s$ to $t$ is called an \emph{$st$-path} and it is denoted $s\leadsto t$. The in-degree of $s$ and the out-degree of $t$ are always 0. If $v$ is a vertex in $G$, an st-path passing through $v$ is denoted $s\overset{v}{\leadsto} t$.

A \emph{flow} in $G$ is a function $f:E(G)\to \mathbb N$ that satisfies the following three constraints: (i) a \emph{capacity constraint on the edges} $0\leq f(u,v) \leq c(u,v)$ for all $(u,v)\in E(G)$, (ii) a \emph{capacity constraint on the vertices} $\sum_{(u,v)\in E(G)} f(u,v)\leq c(v)$ for all $v\in V(G)$, and (iii) a \emph{flow conservation constraint} $\sum_{v\in V(G)} f(v,u) = \sum_{v\in V(G)} f(u,v)$ for all $u\in V(G)\setminus\{s,t\}$. If $e$ is an edge of $G$ we call $f(e)$ the flow through $e$, and the flow through an st-path is defined as $f(s\leadsto t) =\min_{e\in E(G)} \{f(e)\}$. The \emph{value} of a flow $f$ is defined as $|f|=\sum_{v\in V(G)} f(s,v)$. The \emph{maximum-flow problem} is thus the problem of finding a flow of maximum value.

\section{Proof of Theorem \ref{the:ptromino}}\label{sec:main-theorem}
Let $R$ be a connected region with $n$ cells and let $P$ be a collection of cells from $R$ that contain pegs, where $|P|=n/3$. We assume that $n$ is a multiple of 3, since, otherwise the problem is trivially  unsolvable. The following procedure is an algorithm that takes an instance $(R,P)$ from $\textsc{P-Tromino}_{n,n/3}$ and transforms it to an instance $(G,n/3)$ of $\textsc{MaxFlow}_{n/3}$.

\begin{enumerate}
\item Color each cell $R_{(a,b)}\in R$ according to its row number. A cell in an even row is colored black, and a cell in an odd row is colored white; see Fig. \ref{fig:coloring}.
\item Partition the cells of $R \setminus P  $ into two sets $B$ and $W$. We define $B$ to be the set of all cells of $ R$ that are colored black and do not contain a peg, that is, $B=\{R_{(a,b)}\in R\setminus P:\> a \text{ is even}\}$. We define $W$ to be the set of all cells of $R$ that are colored white and do not contain a peg, that is, $W=\{R_{(a,b)}\in R \setminus P:\> a \text{ is odd}\}$.
\item Construct a flow network $G=(V,E)$ as follows.
    \begin{enumerate}
    \item Let $s$ be a source vertex and $t$ be a sink vertex.
    \item Each cell $R_{(a,b)}$ in $R$ is a vertex $(a,b)$ in $V$.
    \item Add an edge from $s$ to each vertex $(a,b)$ whose corresponding cell is colored black, that is, $R_{(a,b)}\in B$.
    \item Add an edge from each vertex $(a,b)$ whose corresponding cell is colored white to $t$.
    \item For each black cell $R_{(a,b)}\in B$ and each cell with a peg $R_{(c,d)}\in P$, there is an edge $((a,b),(c,d))$ in $E$ if and only if $R_{(a,b)}$ and $R_{(c,d)}$ are adjacent in the region $R$; see for example Fig. \ref{fig:peg-vertex}.
    \item For each cell with a peg $R_{(c,d)}\in P$ and for each white cell $R_{(a,b)}\in W$, there is and edge $((c,d),(a,b))$ in $E$ if and only if $R_{(c,d)}$ and $R_{(a,b)}$ are adjacent in the region $R$.
    \item All edges in items (e) and (f) have a capacity of 1, all edges coming from $s$ and all edges going to $t$ have capacity $1$, and all remaining missing edges have a capacity of 0.
    \item Each vertex $(a,b)\in V(G)\setminus \{s,t\}$ has $c((a,b))=1$ and $c(s)=c(t)=\infty$.
    \end{enumerate}
\end{enumerate}

\begin{figure}[t]
    \centering
    \includegraphics[scale=1]{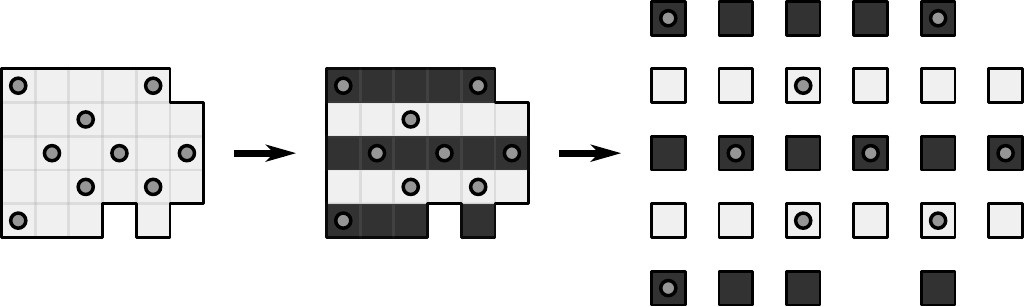}
    \caption{Coloring of the cells with respect the row number. We start counting from row 0 which is the bottom row of the region.}
    \label{fig:coloring}
\end{figure}

\begin{figure}[t]
    \centering
    \includegraphics[scale=1.5]{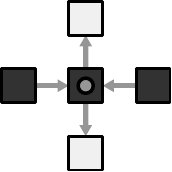}
    \caption{Each cell with a peg has at most 2 incoming edges from black cells and at most 2 outgoing edges to white cells. The adjacent cells on the sides of a cell with a peg in the region $R$ are always of the same color, for example, a black cell with a peg always has black horizontal neighbors and white vertical neighbors.}
    \label{fig:peg-vertex}
\end{figure}

In this work we refer to the graph $G$ obtained from the procedure above as the \emph{region network} of $R$, see for example Fig. \ref{fig:region-network}. It is easy to see that any region network has $n + 2$ vertices and at most $2n/3+4n/3$ edges, and can be constructed in $O(n)$ time. Also note that a region network is a dual graph with two new vertices, some edges removed, and direction given to the remaining edges.


\begin{figure}[t]
    \centering
    \begin{subfigure}[b]{.49\linewidth}
        \centering
        \includegraphics[scale=1]{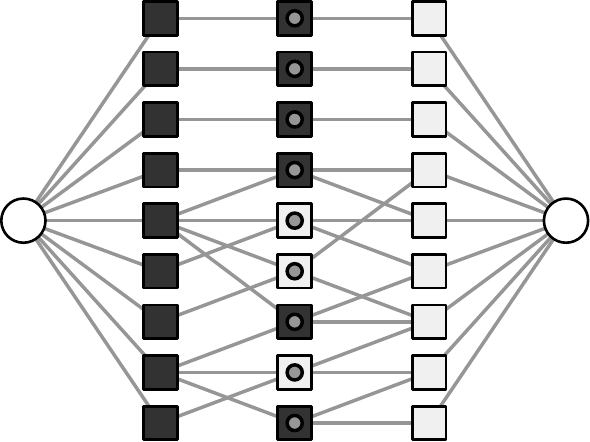}
    \end{subfigure}
    \begin{subfigure}[b]{.49\linewidth}
        \centering
        \raisebox{.685cm}{\includegraphics[scale=1]{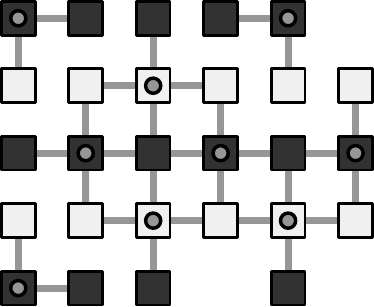}}
    \end{subfigure}
    \caption{Region network of the region on the right. Note that there are no edges (incoming or outgoing) between cells with no pegs. Similarly, there are no edges between cells with pegs.}
    \label{fig:region-network}
\end{figure}

\begin{proposition}\label{pro:reduction}
The value of a maximum flow in $G$ equals $|R|/3$ if and only there is a p-cover of $R$.
\end{proposition}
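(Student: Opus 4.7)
The plan is to set up a correspondence between integer $st$-flows of value $|R|/3$ in $G$ and p-covers of $R$, and then read off the two directions of the equivalence from it. The key structural observation about $G$ is that every $st$-path has the form $s \to b \to p \to w \to t$ with $b \in B$, $p \in P$, $w \in W$, because $s$ emits edges only into $B$, $t$ absorbs edges only from $W$, and the only intermediate edges lie between $B$ and $P$ or between $P$ and $W$. Since $b$ and $w$ have opposite row parities while $p$ has some fixed parity, exactly one of $b$ and $w$ is a horizontal neighbor of $p$ in $R$ and the other is a vertical neighbor; the three cells $\{b, p, w\}$ therefore form a valid L-tromino with corner at the peg $p$, giving a natural encoding of p-trominoes as $st$-paths.

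For the $(\Leftarrow)$ direction I would take any p-cover and, for each of its $|R|/3$ p-trominoes with corner $p$ and tips $b \in B$, $w \in W$, push one unit of flow along the path $s \to b \to p \to w \to t$. Non-overlap of the cover forces each internal vertex to carry at most one unit, so the unit vertex and edge capacities are all respected. The resulting feasible flow has value $|R|/3$, so the max flow is at least $|R|/3$; it cannot exceed $|R|/3$ because every unit of flow must traverse some peg vertex and $|P| = |R|/3$.

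For the $(\Rightarrow)$ direction I would start from an integer max flow of value $|R|/3$, guaranteed by integrality of all capacities, and decompose it into $|R|/3$ internally vertex-disjoint $st$-paths. This decomposition is enabled by the unit vertex capacities, which turn flow conservation into a matching-style constraint; a standard vertex-splitting reduction to the edge-capacitated setting supplies the disjoint paths. Reading off one p-tromino per path by the structural observation yields $|R|/3$ pairwise non-overlapping p-trominoes, which saturate all $|P|$ pegs and, combined with $|B| + |W| = 2|R|/3$, force $|B| = |W| = |R|/3$ and cover every cell of $R$.

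The main obstacle I expect is the $(\Rightarrow)$ direction's decomposition step, namely converting a vertex-capacitated integer max flow into a collection of vertex-disjoint $st$-paths whose aggregate equals the flow. Once that is in place, the rest is a routine counting check that the recovered trominoes exhaust $B$, $P$, and $W$; all of the combinatorial content really sits in the colour-parity argument that each $st$-path encodes exactly one L-tromino with its corner on a peg.
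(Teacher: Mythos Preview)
Your proposal is correct and follows essentially the paper's approach: both directions rest on the correspondence between unit-flow $st$-paths $s\to b\to p\to w\to t$ and p-trominoes with corner at the peg $p$. The only packaging difference is in the $(\Rightarrow)$ direction, where the paper proves directly via three edge-cuts (one isolating $B$, one $P$, one $W$) that every internal vertex carries exactly one unit of flow and then reads off a tromino per peg, whereas you reach the same endpoint by invoking vertex-disjoint path decomposition and a counting check that $|B|=|W|=n/3$.
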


Proposition \ref{pro:reduction} follows from lemmas \ref{lem:onlyif} and \ref{lem:if} below. First, however, we show a technical lemma concerning the uniqueness of st-paths in the region network.

\begin{lemma}\label{lem:unique-flow}
Let $f$ be a flow in $G$ with $|f|=n/3$. For any $R_{(a,b)}$ there exists an unique path $s\overset{(a,b)}{\leadsto} t$ with $f(s\overset{(a,b)}{\leadsto} t)=1$.
\end{lemma}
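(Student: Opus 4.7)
The approach is to first show that under the hypothesis $|f|=n/3$ every vertex of $G$ other than $s$ and $t$ is saturated, i.e.\ carries exactly one unit of flow, and then to observe that this saturation forces a unique unit-flow $st$-path through each such vertex.

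By construction every edge of $G$ either leaves $s$, enters $t$, goes from $B$ to $P$, or goes from $P$ to $W$; consequently every $st$-path has the form $s\to u\to p\to v\to t$ with $u\in B$, $p\in P$, $v\in W$, and $f$ decomposes into unit-flow paths of this shape. Since peg vertices have capacity $1$ and $|P|=n/3$, the total flow is bounded by $n/3$, so the hypothesis $|f|=n/3$ forces each peg to carry exactly one unit. The same bound applied to the edges out of $s$ and into $t$ gives $|B|,|W|\geq n/3$; combined with $|B|+|W|=|R\setminus P|=2n/3$ this yields $|B|=|W|=n/3$, and hence every vertex of $B$ and every vertex of $W$ must also carry exactly one unit of flow.

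Fix now a cell $R_{(a,b)}$. Its total in-flow and total out-flow both equal $1$, and since every edge has capacity $1$, exactly one incident incoming edge and exactly one incident outgoing edge carry flow $1$, while all other incident edges carry $0$. Starting at $(a,b)$ and following at each internal vertex the unique outgoing edge of flow $1$ gives an unambiguous forward walk that must terminate at $t$; the analogous backward walk, following the unique unit-flow incoming edge at each vertex, terminates at $s$. Because $G$ is a DAG with layers $s,B,P,W,t$, the concatenation is a simple $st$-path through $(a,b)$ and every one of its edges carries flow $1$, so its flow value is $1$. Uniqueness is then immediate: any other $st$-path through $(a,b)$ would have to use an edge carrying flow $0$ at $(a,b)$ or at some subsequent internal vertex, and so by the definition $f(s\leadsto t)=\min_e f(e)$ its flow value would be $0$.

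The only nontrivial step in this plan is the opening counting argument that pins down $|B|=|W|=n/3$, since without it one cannot conclude that black and white vertices are saturated. Once saturation at every internal vertex is established, the existence and uniqueness of the unit-flow $st$-path through $(a,b)$ are direct consequences of flow conservation and the unit edge capacities.
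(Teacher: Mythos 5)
Your proof is correct and follows essentially the same route as the paper's: both establish that a flow of value $n/3$ must saturate every internal vertex via cut/capacity arguments, and then deduce the existence and uniqueness of the unit-flow $st$-path through each vertex from the unit capacities. You additionally make explicit the counting step $|B|=|W|=n/3$, which the paper's edge-cut contradiction uses only implicitly.
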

\begin{proof}
The flow passing through $(a,b)$ is upper bounded by 1, either because $c((a,b))=1$ if $R_{(a,b)} \in P$ or because the in-degree or out-degree is 1 if $R_{(a,b)} \in B$ or $R_{(a,b)} \in W$, respectively. Therefore,  if there is flow passing through $(a,b)$ we have that there is an unique path $s\overset{(a,b)}{\leadsto} t$ with $f(s\overset{(a,b)}{\leadsto} t)=1$. In order to finish the proof we need to show that $f(s\overset{(a,b)}{\leadsto} t)=1$ for all $R_{(a,b)}$. Indeed, suppose otherwise that there exists $(a,b)$ such that for any path $s\overset{(a,b)}{\leadsto} t$ the flow through $(a,b)$ is 0.  Then, if $R_{(a,b)} \in B$ (respectively $P$, $W$), the flow through the edge-cut $(\{s\},\{s\}^c)$ (respectively $(\{s\}\cup B,  P\cup W\cup \{t\})$, $(\{s\}\cup B\cup P,  W\cup \{t\})$) will be of value $n/3-1$, which contradicts the fact that the flow has value $n/3$.
\end{proof}

\begin{lemma}\label{lem:onlyif}
If the value of a maximum flow in $G$ equals $n/3$, then the region $R$ with $k=n/3$ pegs has a p-cover.
\end{lemma}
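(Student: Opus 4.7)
The plan is to extract a p-cover from a maximum flow $f$ of value $n/3$ by decomposing $f$ into $n/3$ vertex-disjoint $st$-paths and showing that each path encodes a single p-tromino. First I would note the structural shape of every $st$-path. By the construction of $G$, the only edges out of $s$ go to vertices of $B$, the only edges out of $B$ go to $P$, the only edges out of $P$ go to $W$, and only vertices of $W$ reach $t$; hence every $st$-path has the form $s\to b\to p\to w\to t$ with $b\in B$, $p\in P$, $w\in W$.

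Next I would apply Lemma \ref{lem:unique-flow}: for each cell $R_{(a,b)}$ there is a unique $s\overset{(a,b)}{\leadsto} t$ carrying flow $1$. Because every interior vertex has capacity $1$, no two of these paths share an internal vertex, so the paths are pairwise vertex-disjoint. Each path uses exactly $3$ interior vertices (one from $B$, one from $P$, one from $W$) and the $n$ cells split into exactly $n/3$ such triples $(R_b, R_p, R_w)$.

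The key geometric step is to verify that each triple $(R_b, R_p, R_w)$ is an L-tromino with $R_p$ at the corner. By construction of the edges $(b,p)$ and $(p,w)$, the cells $R_b$ and $R_p$ are adjacent in $R$, and $R_p$ and $R_w$ are adjacent in $R$. Since $R_b\in B$ lies in an even row and $R_w\in W$ lies in an odd row, the three cells cannot be collinear: they cannot share a row, and if they were in the same column then $R_b$ and $R_w$ would have to lie in rows of opposite parity two apart, forcing $R_p$ to sit in the average row — a row of the wrong parity. Therefore the three cells form an L, and since $R_p$ is the one adjacent to the other two, it must be the corner cell. Placing a p-tromino with its hole on $R_p$ and its tips on $R_b$ and $R_w$ is then a legal placement.

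Finally I would assemble the $n/3$ p-trominoes obtained from the $n/3$ paths. Because the paths are vertex-disjoint, the trominoes do not overlap, and because they cover all $n=3\cdot(n/3)$ cells of $R$ they form a p-cover. I expect the only real obstacle to be the L-versus-I argument of the third step; the remaining points are essentially bookkeeping on top of Lemma \ref{lem:unique-flow}.
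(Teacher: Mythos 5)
Your proposal is correct and follows essentially the same route as the paper: invoke Lemma~\ref{lem:unique-flow} to decompose the flow into vertex-disjoint $st$-paths of the form $s\to b\to p\to w\to t$ and place one p-tromino per path. The only difference is that you spell out the parity check that each triple forms an L rather than an I (which the paper leaves implicit in the row coloring and Fig.~\ref{fig:peg-vertex}); this is a worthwhile detail to make explicit, but not a different argument.
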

\begin{proof}
Let $f$ be a flow of maximum value in $G$, with $|f|=n/3$. Let $R_{(a,b)}$ be a cell with a peg. By Lemma \ref{lem:unique-flow} we known that there is a unique path $s\overset{(a,b)}{\leadsto} t$ with a flow of 1 going through it. Moreover, from our construction of $G$, a path $s\overset{(a,b)}{\leadsto} t$ always has the form
\begin{equation}\label{eq:path}
s\to (c,d) \to (a,b) \to (c',d') \to t,
\end{equation}
where $R_{(c,d)}$ is a black cell with no peg and $R_{(c',d')}$ is a white cell with no peg. 

Now we can construct a p-cover of $R$ using the following ``p-tromino placement rule'': 
place the tromino $T=R_{(c,d)}\cup R_{(a,b)}\cup R_{(c',d')}$ on top of $R_{(a,b)}$. In Fig. \ref{fig:maxflow} we show a correspondence between a flow in the region network and a tromino cover.

\begin{figure}[t]
    \centering
    \begin{subfigure}[b]{.49\linewidth}
        \includegraphics[scale=1]{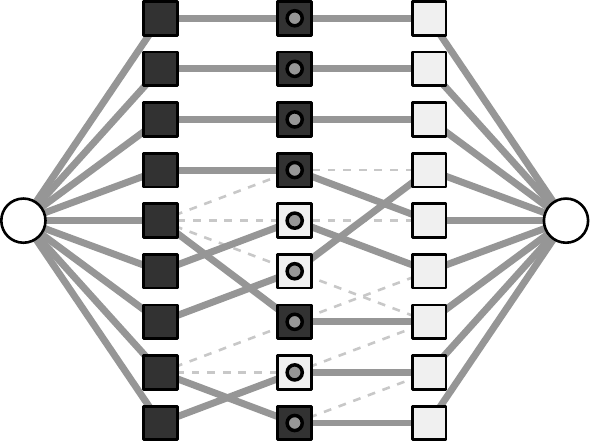}
        \centering
    \end{subfigure}
    \begin{subfigure}[b]{.49\linewidth}
        \raisebox{.685cm}{\includegraphics[scale=1]{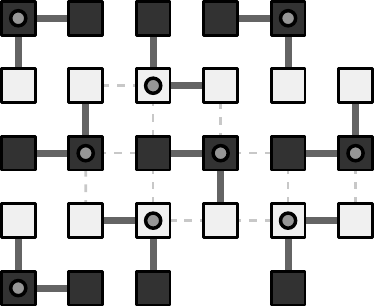}}
        \centering
    \end{subfigure}
    \caption{Maximum flow in a region network on the left. Dashed lines are the edges of the graphs with no flow, and solid lines are the edges with flow. The right subfigure shows the region network over the region.}
    \label{fig:maxflow}
    \centering
\end{figure}

To finish the proof we need to show that our placement of the trominoes is a p-cover of $R$. This follows from Lemma~\ref{lem:unique-flow} since for each cell $R_{(a,b)}$ the vertex $(a,b)$ appears in exactly one st-path.
\end{proof}

\begin{lemma}\label{lem:if}
If the region $R$ with $k=n/3$ pegs has a p-cover, then the value of a maximum flow in $G$ equals $n/3$.
\end{lemma}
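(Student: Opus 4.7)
The plan is to reverse the construction used in Lemma~\ref{lem:onlyif}: given a p-cover of $R$, I build an explicit flow of value $n/3$ in $G$. A convenient upper bound comes for free, since every $st$-path in $G$ has the form $s \to B \to P \to W \to t$, so each unit of flow must pass through a peg vertex; as every peg vertex has capacity $1$ and $|P|=n/3$, the max flow is a priori at most $n/3$. Hence it suffices to exhibit a flow that attains this bound.

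Given a p-cover, the first step is to catalogue its p-trominoes. Since $|R|=n$ and each tromino covers three cells, the cover contains exactly $n/3$ p-trominoes. Because a peg physically prevents any flat (non-hole) cell of a p-tromino from sitting on it, each peg must lie under the corner hole of some tromino; the equality $|P|=n/3$ then forces a bijection between pegs and tromino corners. The second step is to identify the two tips of each such tromino via the coloring: a p-tromino with corner at peg $R_{(c,d)}$ has one horizontal tip (same row, hence same colour) and one vertical tip (adjacent row, hence opposite colour), so by Fig.~\ref{fig:peg-vertex} exactly one tip lies in $B$ and the other in $W$.

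I then define the flow $f$ by pushing one unit along the four-edge path
\[
s \to (a,b) \to (c,d) \to (a',b') \to t
\]
for each p-tromino in the cover, where $(c,d)$ is the corner peg, $(a,b)$ the black tip (in $B$), and $(a',b')$ the white tip (in $W$); all other edges carry $0$. Every such edge exists in $G$ by items (c)--(f) of the construction.

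Verifying that $f$ is a valid flow is then essentially bookkeeping: edge capacities are respected because distinct trominoes give edge-disjoint paths (the trominoes are pairwise disjoint as cell sets); vertex capacities are respected because each internal vertex corresponds to a cell used by exactly one tromino; conservation is automatic since each intermediate vertex along a four-edge path sends out precisely what it receives. Summing at the source gives $|f|=n/3$, which matches the upper bound. The only real subtlety is the initial bijection between pegs and tromino corners; once that is in hand, the argument is the reverse of Lemma~\ref{lem:onlyif}.
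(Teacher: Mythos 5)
Your proposal is correct and takes essentially the same approach as the paper: you route one unit of flow along the four-edge path $s\to B\to P\to W\to t$ determined by each p-tromino of the cover, check validity from the disjointness of the trominoes, and certify maximality by a capacity bound (you cut at the peg vertices of capacity $1$, the paper at the $n/3$ source edges). The only addition is your explicit justification that pegs are in bijection with tromino corners, which the paper leaves implicit.
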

\begin{proof}
Let $C$ be a p-cover of size $n/3$. We construct a maximum flow $f$ whose value is $n/3$.

Take any tromino $T\in C$. With no loss of generality, suppose that $T=\trominodr$ and that its corner cell $R_{(a,b)}$ is colored white. Hence, the upper cell $R_{(c,d)}$ is black and its left tip $R_{(c',d')}$ is white. Thus we have a path $s\overset{(a,b)}{\leadsto} t$ in the network graph $G$ of the form $s\to {(c,d)} \to {(a,b)} \to {(c',d')} \to t$. For all edges $e$ of this path we assign a flow $f(e)=1$. We do this for all other trominos in $C$.

First we show that $f$ is a flow with value $|f|=n/3$. Since $C$ is a cover, there are no overlapping trominoes and all cells are covered. This is equivalent to say that each vertex of $G$ that is not the source $s$ or the sink $t$ appears in exactly one st-path $s\leadsto t$ with $f(s\leadsto t)=1$. This implies that $f$ is a flow because each vertex has an incoming and outgoing flow of 1. Furthermore, since we have $n/3$ trominoes in $C$, there are $n/3$ such st-paths, and hence, $|f|=n/3$.

To finish the proof, we show that $|f|=n/3$ is maximum. This follows directly by considering the cut $(\{s\},\{s\}^c)$ consisting on the $n/3$ edges incident with the source $s$, and any flow is bounded by the capacity of any cut.
\end{proof}

With lemmas \ref{lem:onlyif} and \ref{lem:if} we finish our proof of Proposition \ref{pro:reduction}. Now we show that our reduction is parsimonious.
\begin{lemma}
The reduction is parsimonious.
\end{lemma}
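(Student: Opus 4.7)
The plan is to exhibit an explicit bijection between the set of p-covers of $R$ and the set of maximum flows in $G$ of value $n/3$, using the two maps already implicit in the proofs of Lemmas \ref{lem:onlyif} and \ref{lem:if}. Let $\Psi$ denote the map from p-covers to flows constructed in Lemma \ref{lem:if}: each tromino $T\in C$ whose corner sits on the peg cell $R_{(a,b)}$ contributes one unit of flow along the unique path $s\to(c,d)\to(a,b)\to(c',d')\to t$ determined by the two tips of $T$. Let $\Phi$ denote the map from maximum flows to p-covers given by Lemma \ref{lem:onlyif}: for each peg cell $R_{(a,b)}$ take the unique path $s\overset{(a,b)}{\leadsto}t$ carrying unit flow (which exists by Lemma \ref{lem:unique-flow}), and place a tromino on the three non-source, non-sink cells it passes through.

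The plan then reduces to verifying $\Phi\circ\Psi = \mathrm{id}$ and $\Psi\circ\Phi = \mathrm{id}$. For $\Phi\circ\Psi$, start with a p-cover $C$ and consider any tromino $T\in C$ with corner on peg $R_{(a,b)}$; by construction $\Psi(C)$ sends a unit of flow along exactly the path induced by $T$, so by Lemma \ref{lem:unique-flow} applying $\Phi$ recovers the same tromino $T$, for every peg. Conversely, for $\Psi\circ\Phi$, fix a maximum flow $f$; applying $\Phi$ yields a cover whose tromino at peg $R_{(a,b)}$ corresponds to the unique unit-flow path through $(a,b)$. Reapplying $\Psi$ to that cover restores unit flow along precisely these $n/3$ paths, and since by Lemma~\ref{lem:unique-flow} every edge of $G$ carrying flow under $f$ lies on exactly one such path, the resulting flow coincides with $f$ edge by edge.

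The main obstacle, as in Lemma \ref{lem:onlyif}, is to argue that the $n/3$ unit-flow paths in a maximum flow are globally vertex-disjoint on the interior, so that the decomposition of $f$ into $st$-paths is unique rather than merely locally well-defined. This is handled by combining two facts: each interior vertex has capacity $1$, forcing path-vertex-disjointness; and each $st$-path must traverse exactly one peg vertex because every edge of $G$ either enters a peg from a black cell or leaves a peg toward a white cell, so the $n/3$ peg vertices are in bijection with the $n/3$ paths of the decomposition. Once this uniqueness of decomposition is established, both composition identities follow directly, and we conclude that $\Phi$ and $\Psi$ are mutually inverse bijections, which is precisely the parsimony condition.
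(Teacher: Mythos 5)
Your proposal is correct and takes essentially the same route as the paper: the paper also establishes parsimony by pairing each unit-flow $st$-path of the form of Eq.~(\ref{eq:path}) with the p-tromino given by the placement rule of Lemma~\ref{lem:onlyif}, and conversely assigning to each tromino its induced path as in Lemma~\ref{lem:if}. Your write-up is merely more explicit in checking that the two maps are mutually inverse and that a value-$n/3$ flow decomposes uniquely into interior-vertex-disjoint $st$-paths, a point the paper leaves implicit through Lemma~\ref{lem:unique-flow}.
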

\begin{proof}
Let $H$ be our reduction presented in this section and let us fix a region $R$ and a list of pegs $P$. Recall that $H$ is parsimonious if it preserves the number of yes-certificates. That is, the sets
\[\{C:\text{$(R,P)\in \textsc{P-Tromino}_{n,n/3}$  and $C$ is a p-cover of $(R,P)$}\}\]
and
\[\{f:\text{$(H(R,P),n/3)\in \textsc{MaxFlow}_{n/3}$, $f$ is a flow  in $H(R,P)$ with $|f|=n/3$}\}\]
are of the same size.

The existence of a bijective function between both sets above follows from the following argument. By our p-tromino placement rule of Lemma \ref{lem:onlyif}, to any unique st-path in $H(R,P)$ of the form of Eq.(\ref{eq:path}) with flow passing through it,  there is exactly one p-tromino $T\in C$; a different st-path corresponds to a different p-tromino $T'\in C$. Likewise, for any $T\in C$ there is exactly one st-path of the form of Eq.(\ref{eq:path}) with a flow of 1 passing through the vertices corresponding to the cells in $T$.
\end{proof}

\section{Proof of Theorem \ref{the:counting}}\label{sec:counting}
Let us start by stating some trivial facts about matchings in graph theory. (i) The number of perfect matchings of a graph is the product of the number of perfect matchings of its components. (ii) The number of perfect matchings of a cycle is 0 or 2 whenever the  cycle is  odd or even, respectively. (iii) The number of perfect matchings of the union of $k$ cycles is $2^k$ if the cycles are all even and 0 if some of them are odd.
(iv) If a graph has a \emph{leaf} $v$, i.e, a node with an unique incoming edge $(v,w)$, then that edge should be in any perfect matching of the graph.

Let $G = (P \cup B \cup W \cup \{s,t\}, E)$ be the region network that is constructed with our reduction of Section \ref{sec:main-theorem}.
From $G$, let us consider the following two graphs $G'= (P \cup B \cup \{s,t'\}, E')$ and $G''= (P \cup W \cup \{s',t\}, E'')$ given by $E' = E(G-W-\{t\}) \cup P\times\{t'\}$ and $E'' = E(G-B-\{s'\}) \cup \{s'\}\times P$, with distinguished sources and sinks $s,t'$ and $s',t$, respectively. Note that $G$ has a flow  of  value $n/3$  if and only if there exists a flow in $G'$ and a flow in $G''$ both with value $n/3$. Furthermore, the number of flows of value $n/3$ in $G$ equals the number of flows with value $n/3$ in $G'$ and $G''$ multiplied.

\begin{lemma}\label{lem:h}
The number of  flows  in $(G', s, t')$  of value $n/3$  is the same as the number of perfect matchings in $H = G'-s-t'$.
\end{lemma}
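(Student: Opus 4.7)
The plan is to construct an explicit bijection between flows of value $n/3$ in $G'$ and perfect matchings of $H = G' - s - t'$. The network $G'$ has the layered structure $s \to B \to P \to t'$ in which every internal edge and every internal vertex has capacity $1$, so a flow of value $n/3$ should decompose into $n/3$ vertex-disjoint unit paths whose middle segments form a matching between $B$ and $P$.

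For the forward direction, take a flow $f$ with $|f| = n/3$. Since the only edges into $t'$ are the $|P| = n/3$ unit-capacity edges from pegs, each must be saturated, i.e.\ $f(p, t') = 1$ for every $p \in P$. Flow conservation at each $p$ combined with the unit vertex capacity $c(p) = 1$ forces exactly one incoming edge $(b, p)$ to carry flow $1$ and all others to carry flow $0$. Define $\Phi(f) := M_f = \{(b, p) : f(b, p) = 1\}$; the vertex capacity $c(b) = 1$ prevents any black cell from appearing twice, so $M_f$ is a matching that saturates $P$. Under the assumption $|B| = n/3$ (necessary for $H$ to admit a perfect matching at all), $|M_f| = n/3 = |B|$ also saturates $B$, so $M_f$ is perfect.

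For the inverse, given a perfect matching $M$ of $H$, define $f_M$ by saturating every edge $(s, b)$ with $b \in B$, every edge $(p, t')$ with $p \in P$, and every edge $(b, p) \in M$, and by assigning $0$ to every other edge. Perfectness of $M$ turns the conservation constraints at each internal vertex into the trivial $1 = 1$, all unit capacities are respected, and $|f_M| = |B| = n/3$. The compositions $\Phi \circ \Psi$ and $\Psi \circ \Phi$ return the identity by the uniqueness of the saturated middle edge established above, which is the only real content of the bijection.

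The main delicate point I anticipate is justifying that $M_f$ saturates $B$ (rather than merely $P$), which is what promotes it from a matching to a perfect matching. The edge capacities from $s$ already give $|B| \geq |f| = n/3$ automatically; the complementary bound $|B| \leq n/3$ is the regime in which the lemma has nonempty content, and I would take this as a standing hypothesis of the lemma. Outside this regime $H$ admits no perfect matching so one side of the equality vanishes, and in the proof of Theorem~\ref{the:counting} the lemma is combined with the symmetric statement for $G''$, where the constraint $|B| + |W| = 2n/3$ forces $|B| = |W| = n/3$ in every case where both counts are nonzero.
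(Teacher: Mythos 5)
Your proof is correct and follows essentially the same route as the paper's: the identical bijection sending a flow to the set of saturated $B$--$P$ edges, with the same verification that the unit capacities force this set to be a matching saturating $P$, and the same explicit inverse assigning unit flow along each path $s\to b\to p\to t'$ determined by a perfect matching. Your explicit attention to the hypothesis $|B|=n/3$ --- needed so that a matching saturating $P$ is actually a \emph{perfect} matching of $H$ --- addresses a point the paper's proof passes over silently when it asserts the matching ``has value $|B|$,'' and is worth keeping.
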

\begin{proof}
It suffices to prove that the map that takes a flow $f$ and assigns a set $M$ of edges with flow equals to 1 is a bijection between the set of flows of value $n/3$ and perfect matchings in $H$.
The map is clearly one to one, hence, we only need to check that $M$ is a perfect matching and that any perfect matching is the image of a flow $f$.
In order to check the former we need to prove that there are no adjacent edges in $M$.
Indeed, if for instance, $(x,y)$ and $(x',y)$ are two such edges, then the flow conservation constraint  cannot be satisfied on vertex $y$ because the capacity of edge $(y,t')$ is 1. 
Therefore, the edges in $M$ form a matching in $H$, which is also perfect because it has value $|B|$. 
Finally, let $M$ be a  given a perfect matching on $H$. If we define the flow $f$ to be 1 for the edges in the matching $M$ and for the edges incident with $s$ and $t'$, but 0 otherwise, then  clearly  $f$ is a flow on $(G', s, t')$ with image $M$.
\end{proof}

\renewcommand{\algorithmicrequire}{\textbf{Input:}}
\renewcommand{\algorithmicensure}{\textbf{Output:}}
\begin{algorithm}[t]
    \centering
    \begin{algorithmic}[1]
    \Require graph $H$
    \Ensure number of perfect matchings of $H$
    \State $M \leftarrow  \emptyset$; $H' \leftarrow H$;
    \While{there is a leaf $v$ in $H'$ with unique incoming edge $(w,v)$}
        \State $ M \leftarrow M \cup \{(v,w)\}$;
        \State $ H' \leftarrow  H'-v-w $;
    \EndWhile
    \State Let $H_1,\ldots, H_k$ be the connected components of $H'$, where $k=0$ if $V(H')=\emptyset$;
    \If{there is $i$ such that $H_i$ has an odd cycle or an isolated vertex}
        \State \Return 0
    \Else
        \State \Return $2^k$ 
    \EndIf
    \end{algorithmic}
    \caption{Counting perfect matchings in $H$.}
    \label{alg:counting-matching}
\end{algorithm}

In Algorithm \ref{alg:counting-matching} we present a method for counting perfect matchings on the graph $H$ of Lemma \ref{lem:h}. Let us prove the correctness of Algorithm \ref{alg:counting-matching}.

The {\bf while} loop of lines 2--5 has one  key invariant, namely that  \emph{variable $M$ is always a matching of $H$ and $M$ should be contained in any possible perfect matching of $H$.}
Another invariant is that \emph{the number of vertices of $H'$ in $P$ is the same as the number of vertices of $H'$ in $B$}.
The  {\bf while} loop always finishes because those edges incident to pending vertices must be in any perfect matching.

In lines 3--4 inside the {\bf while} loop, Algorithm \ref{alg:counting-matching} erases those pending vertices together with its unique neighbor, since its incident edge should be on any perfect matching of $H'$. Thus, the new graph $H'$ obtained  contains a perfect matching if and only if $H$ does.

Let us call $\hat H$ the graph $H'$ at the end of the loop. By the condition of line 2 of the {\bf while} loop, the graph $\hat H$ has  no leafs.

\begin{lemma}\label{lem:cycles}
$\hat H$  either has some isolated vertices or is the union of   cycles.
\end{lemma}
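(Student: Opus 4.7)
My plan is to split on whether $\hat H$ has an isolated vertex. If it does, the first alternative of the lemma holds and there is nothing left to prove; so the interesting case is when $\hat H$ has no isolated vertex, and the goal becomes to show that $\hat H$ is $2$-regular (hence a disjoint union of cycles). The argument then reduces to a clean degree-counting argument exploiting the bipartite structure of $\hat H$ together with the geometry of the region network.

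The key steps, in order, would be the following. First, observe that the exit condition of the \textbf{while} loop guarantees $\hat H$ has no leaf; combined with the no-isolated-vertex assumption, every vertex of $\hat H$ has degree at least $2$. Second, invoke the structural property of region networks illustrated in Figure~\ref{fig:peg-vertex}: for any peg cell $R_{(c,d)}$, its two horizontal neighbors share its row color and its two vertical neighbors have the opposite color, so at most two of its four geometric neighbors can lie in $B$. Hence $\deg_{\hat H}(p) \le 2$ for every peg $p$, and combined with the lower bound this forces $\deg_{\hat H}(p) = 2$. Third, use the bipartite handshake identity
\[\sum_{b \in B \cap V(\hat H)} \deg_{\hat H}(b) \;=\; \sum_{p \in P \cap V(\hat H)} \deg_{\hat H}(p) \;=\; 2\,\lvert P \cap V(\hat H)\rvert,\]
and invoke the invariant $\lvert P \cap V(\hat H)\rvert = \lvert B \cap V(\hat H)\rvert$ recorded just before the lemma statement. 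This yields average degree $2$ on the $B$ side, while every such degree is at least $2$, so every vertex of $B \cap V(\hat H)$ has degree exactly $2$ as well. A graph in which every vertex has degree $2$ decomposes into a disjoint union of cycles, finishing the proof.

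The main obstacle, in my view, is pinning down the peg-degree bound directly from the construction of the region network rather than treating it as a black-box consequence of Figure~\ref{fig:peg-vertex}; one has to check that $H$ contains only $B$-to-$P$ edges (no edges internal to $B$ or to $P$), and that the row-parity coloring is precisely what limits a peg's neighborhood in $B$ to size at most $2$. Once that geometric input is secured, the remainder is a short handshake-plus-invariant argument and the lemma becomes essentially a corollary of the region-network structure together with the \textbf{while}-loop invariants.
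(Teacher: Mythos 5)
Your proposal is correct and follows essentially the same route as the paper's proof: bound peg degrees above by $2$ from the region-network construction, bound all degrees below by $2$ using the no-leaf/no-isolated-vertex conditions, and then apply the bipartite handshake identity together with the invariant $|\hat P| = |\hat B|$ to force $2$-regularity on the $B$ side. Your extra care in justifying the peg-degree bound geometrically (and in noting that $H$ has only $B$-to-$P$ edges) only makes explicit what the paper asserts ``by the construction of graph $H$.''
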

\begin{proof}
By the construction of graph $H$, a vertex $v$ of $H$ has degree $d_H(v)\leq 2$ if $v \in P$ and $d_H(v)\leq 4$ if it is in $B$.
The same holds for vertices on $H'$ (and then on $\hat H$), since $H'$ is a subgraph of $H$. 

Let $\hat P$ (resp. $\hat B$) be those vertices of $\hat H$ in $P$ (resp. in $B$), i.e., $\hat P = V(\hat H) \cap V$ and $\hat B = V(\hat H) \cap B$.
Notice that by the second invariant $|\hat P| = |\hat B| $.

We need to prove that if $\hat H$ has no isolated vertices then it is the union of cycles, i.e., all its vertices have degree 2.
 Indeed, since  all vertices have degree greater than 1, all vertices in $\hat P$ should have degree 2. Now
$$
2 |\hat P| = \sum_{v\in \hat P} 2 = \sum_{v\in \hat P} d_{\hat H} (v)  = 
|E(\hat H)| =  \sum_{v\in \hat B} d_{\hat H} (v) \geq \sum_{v\in \hat B} 2 = 2 |\hat B|.
$$
However, $|\hat P| = |\hat B|$, and hence, the inequality should be an equality which happens only if $d_{\hat H} (v) = 2$ for all $v\in \hat B$. Therefore, the vertices of $\hat H$ have degree 2, and $\hat H$ should be the union of some number  of  cycles.
\end{proof}

From Lemma \ref{lem:cycles} it follows that the connected components of line 6 in Algorithm \ref{alg:counting-matching}  are all cycles or have some isolated vertex. Therefore if either $\hat H$ has an isolated vertex or an odd cycle, then there will be no perfect matching; otherwise, $\hat H$ will have $k$ even cycles and, therefore,  $2^k$ perfect matchings. This proves the correctness of Algorithm \ref{alg:counting-matching}.

Algorithm \ref{alg:counting-matching} runs in linear time maintaining the degree of each vertex of $H'$ and a list of the leafs of $H'$. 
Since the degree is upper bounded by 4, computing the degrees takes at most $2\times 4n/3$ time, and in that run it is possible to build a list of leafs.
Given a list of leafs, the condition of the {\bf while} loop uses constant time, while the deletion of vertices in line 5 also takes constant time because of the bound on the degrees.
After line 5 is executed, the degrees are refreshed at those vertices incident with vertex $w$, and if one of them becomes a leaf, it will be added to the list of leafs. All theses operations require constant time.
The {\bf while} loop is executed at most $n/3$  times. The connected components of $\hat H$ are also computed in linear time.
Finally, the number of flows in $G$ is obtained by multiplying the number of matchings in $H$ times the number of matchings in $H''= G''- s'-t$.

\section{Concluding Remarks and Open Problems}\label{sec:conclusions}
In this work we showed how to characterize the tromino tiling problem with pegs using flow networks. We called this characterization the \emph{region network} representation of a region with pegs. Then we showed that perfect matchings in bipartite subgraphs of a region network correspond to p-covers in a region with pegs. Thus, the number of perfect matchings in such bipartite graphs give us the number of p-covers and we showed that counting these perfect matchings can be done in linear-time. All these results present a tight connection between tromino tilings with pegs, maximum flows in flow networks and perfect bipartite matchings, thus giving us a new way to understand tromino tiling problems. 

Below we give a couple of open problems that we believe are challenging and will deepen our understanding of tromino tilings in general.
\begin{enumerate}
    \item In Theorem \ref{the:ptromino} we showed that our reduction holds only when the number of pegs is $n/3$ and $n$ is a multiple of 3. Suppose now that beside p-trominoes, we also have normal (with no hole) trominoes at our disposal. A natural question to ask is what will happen if we have $n/3-k$ pegs and $k$ is given. Intuitively, if $k$ is ``small,'' we can construct in linear-time a region network and run a polynomial-time algorithm for maximum flow to place p-trominoes on pegs. Then, for the squares that remain to be covered we use normal trominoes and, since $k$ is close to 0, we can find a tiling by brute force and thus in constant-time. On the other hand, if $k$ is ``large,'' we have only a few pegs on the region and our brute force approach does not give us an efficient algorithm and in general should be NP-complete. What are the values of $k$ for which the P-Tromino tiling problem is NP-complete or admits a polynomial-time algorithm? We believe the P-Tromino tiling problem is fixed-parameter tractable with $k$ as a parameter.
    \item The P-Tromino tiling problem, as defined in this paper, is given by a region with pegs and trominoes with holes. Now let us suppose that we interchange those roles and this time the trominoes have pegs in their corner cells and the region has holes. Suppose further that we can place a tromino on any hole on the region, even if a tip of the tromino is covering another hole. This version of the problem is similar to a p-tromino tiling but it is not the same, because now the tips of the trominoes can cover other holes on the region. Here we need to change our construction of a region network to allow edges between cells with pegs. In this new version of the problem we can also ask: how does the number of holes on the region affects the computational complexity of the tiling problem? If the number of holes on the region is close to $n$, then the tiling problem is NP-complete. Then, for which number of holes on the region does the problem admits a polynomial-time algorithm?
\end{enumerate}

\bibliographystyle{plain}
\bibliography{tromino-flow}

\end{document}